\newcommand{\comment}[1]{}
\newcommand{\commentOut}[1]{}
\newcommand{\beq}{\begin{equation}}
\newcommand{\eeq}{\end{equation}}
\def\EE{\hbox{I\kern-.1667em\hbox{E}}}
\newcommand{\uniform}{\mathcal{U}}
\newcommand{\empirical}{\mathcal{E}}
\newcommand{\expfam}{\mathcal{F}}
\newcommand{\actual}{\mathcal{G}}
\newcommand{\ovuniform}{\overline{\mathcal{U}}}
\newcommand{\ovempirical}{\overline{\mathcal{E}}}
\newcommand{\ovexpfam}{\overline{\mathcal{F}}}
\begin{document}
\title{$k$-Cut:
    A Simple Approximately-Uniform Method for Sampling
    Ballots in Post-Election Audits\thanks{Supported by Center for Science of Information (CSoI), an
NSF Science and Technology Center, under grant agreement CCF-0939370.}}
\titlerunning{$k$-Cut: Simple Approximate Sampling}
% If the paper title is too long for the running head, you can set
% an abbreviated paper title here
%
\author{Mayuri Sridhar \and Ronald L. Rivest}
\authorrunning{M. Sridhar and R.Rivest}
% First names are abbreviated in the running head.
% If there are more than two authors, 'et al.' is used.
%
\institute{Massachusetts Institute of Technology, Cambridge MA 02139, USA\\
\email{mayuri@mit.edu}\\
\email{rivest@csail.mit.edu}}
\maketitle              % typeset the header of the contribution
\begin{abstract}
  We present an approximate sampling framework and discuss how
  risk-limiting audits can compensate for these approximations, while
  maintaining their ``risk-limiting'' properties. Our framework is general
  and can compensate for counting mistakes made during audits.
  
  Moreover, we present and analyze a simple approximate sampling method,
  ``$k$-cut'', for picking a ballot randomly from a stack, without counting. Our method
  involves doing $k$ ``cuts,'' each involving moving a random portion
  of ballots from the top to the bottom of the stack,
  and then picking the ballot on top.  
  Unlike conventional methods of picking a ballot at
  random, $k$-cut does not require identification numbers on the
  ballots or counting many ballots per draw. 
  We analyze how close the distribution of chosen ballots is to
  the uniform distribution, and design different mitigation procedures.
  We show that $k=6$ cuts is enough for an risk-limiting election audit, based on
  empirical data,
  which would provide a significant increase
  in efficiency.

\keywords{sampling \and elections \and auditing \and post-election audits \and
              risk-limiting audit \and Bayesian audit.}
\end{abstract}
%
%
%
%\section{Key Differences}
%
%This paper is a longer version of the conference paper submitted to the Voting'19
%workshop in 2018. The major differences between the conference version and the
%longer version of this paper include
%\begin{itemize}
%\item This version of the paper includes more data analysis for the single-cut
%distributions provided by the author. As part of this data analysis, we fit several
%models to our data and analyze their goodness of fit.
%\item This version of the paper includes the complete data sets for single-cut
%distributions for the authors, for complete reproducibility.
%\item This version of the paper analyzes the increase in efficiency during $k$-cut,
%with a more detailed analysis of typical counting techniques.
%\end{itemize}
%
%Outside of these additions, this version of the paper generally includes more examples
%and details that were left out of the conference version for brevity.

\section{Introduction}
\label{sec:introduction}

The goal of post-election tabulation audits are to provide assurance
that the reported results of the contest are correct; that is, they agree with
the results that a full hand-count would reveal.  To do this, the auditor
draws ballots uniformly at random one at a time from the set of all cast paper
ballots, until the sample of ballots provides enough assurance that the reported outcomes are correct.

The most popular post-election audit method is known as a ``risk-limiting
audit'' (or RLA), invented by Stark (see his web page~\cite{Stark-papers-talks-video}).
See also
\cite{Bretschneider-2012-risk,Johnson-2004-election-certification,Lindeman-2008-principles,Lindeman-2012-gentle,Rivest-2018-bayesian-tabulation-audits,Rivest-2012-bayesian}
for explanations, details, and related papers.
An RLA takes as input a ``risk-limit'' $\alpha$ (like $0.05$), and ensures that
if a reported contest outcome is incorrect, then this error will be
detected and corrected with probability at least $1-\alpha$.

This paper provides a novel method for drawing a sample of the cast
paper ballots.  The new method may often be more efficient than
standard methods.  However, it has a cost: ballots are drawn in a way
that is only ``approximately uniform''.
This paper also provides ways of compensating for such
non-uniformity.

There are two standard approaches for drawing a random sample of
cast paper ballots:
\begin{enumerate}
\item
  \textbf{[ID-based sampling]}
  Print on each scanned cast paper ballot a unique identifying number (ballot ID numbers).
  Draw a random sample of ballot ID numbers, and retrieve the corresponding ballots.
\item
  \textbf{[Position-based sampling]}
  Give each ballot an implicit ballot ID equal to its position in a canonical
  listing of all ballot positions.
  Then proceed as with method (1).
\end{enumerate}

These methods work well, and are guaranteed to produce random samples. 
In practice, auditors use software, like~\cite{Stark-2017-ballot-polling-tools}, which
takes in a ballot manifest as input and produces the random sample of ballot ID numbers.
In this software, it is typically assumed that sampling is done without replacement.

However, finding even a single ballot using these sampling methods can be tedious and
awkward in practice. For example, given a random sample of ID numbers, one may
need to count or search through a stack of ballots to find the desired
ballot with the right ID or at the right position. Moreover, typical auditing procedures assume
that there are no mistakes when finding the ballots for the sample. Yet, this seems
to be an unreasonable assumption - if we require a sample size of 1,000 ballots, for instance,
it is likely that there are a few ``incorrectly'' chosen ballots along the way, due to counting errors.
In the literature about RLAs, there is no way to correct for these mistakes.

\emph{Our goal is to simplify the sampling process.}

In particular, we define a general framework for compensating for ``approximate sampling''
in RLAs. Our framework of approximate sampling can be used to measure
and compensate for human error rate while using the counting methods outlined
above. Moreover, we also define a simpler approach for
drawing a random sample of ballots, which does not rely on counting at all. Our technique is simple
and easy to iterate on and may be of particular interest when the
stack of ballots to be drawn from is large. We define mitigation procedures to account
for the fact that the sampling technique is no longer uniformly random.

\paragraph{Overview of this paper.} % TODO: fix this

Section~\ref{sec:notation} introduces the relevant notation that we use 
throughout the paper.

Section~\ref{sec:k-cut} presents our proposed sampling
method, called ``$k$-cut.'' 
%  Basically, to select one ballot, $k$-cut repeats $k$ times the
%operation of choosing a randomly-sized segment of ballots from the top
%of the stack of ballots, and moving it to the bottom.  This is like a
%``cut'' in cutting a deck of cards 
%\footnote{\url{https://en.wikipedia.org/wiki/Cut\_(cards)}}.
%Then the top ballot of the stack is selected for the audit.

Section~\ref{sec:single-ballot-selection}
studies the distribution of single cut sizes,
and provides experimental data.
We then show how iterating a single cut
provides improved uniformity for ballot selection.
%For example, while our
%distribution of single-cut sizes has variation distance from uniform
%of about 0.25, performing four cuts brings this variation distance
%down to about 0.007, and iterating eight times brings it down to about
%$8\times 10^{-5}$. 
%
%We also argue that the distribution of selected
%ballots converges, as the number $k$ of cuts increases,
%to the uniform distribution in general (and not only
%for our experimental parameters).  These results are obtained
%both in a model-based way and a model-independent way.

Section~\ref{sec:approximate-sampling} discusses the major questions
that are brought up when using ``approximate'' sampling in a post-election audit.

%Section~\ref{sec:mitigation-strategies} defines and discusses our mitigation strategy,
%risk limit adjustment,
%for dealing with non-uniformity
%in the sampling procedure.
%
%Section~\ref{sec:problem-setup} defines the setup of an RLA, for a plurality
%election. In this section, we discuss what it means for the audit stopping condition to
%be satisfied, based on the ``risk limit''.

Section~\ref{sec:auditing-arbitrary-contests} proves a very general
result: that any general statistical auditing procedure for an
arbitrary election can be adapted to work with
approximate sampling, with simple mitigation procedures.

Section~\ref{sec:multi-stack-sampling} discusses how to adapt the
$k$-cut method for sampling when the ballots are organized into
multiple stacks or boxes.

Section~\ref{sec:practical-guidance} provides some guidance
for using $k$-cut in practice.

Section~\ref{sec:discussion-and-open-problems} gives some 
further discussion, lists some
open problems, and makes some
suggestions for further research.

Section~\ref{sec:conclusions} summarizes our contributions.

\section{Notation and Election Terminology}
\label{sec:notation}

\paragraph{Notation.}
  
%We let $\ln(x)$ denote the natural logarithm of $x$, and let $\lg(x)$
%denote the base-two logarithm of $x$.

We let $[n]$ denote the set $\{0, 1, \ldots, n-1\}$, and
we let $[a, b]$ denote the set $\{a, a+1, \ldots, b-1\}$.

We let $\uniform[n]$ denote the uniform distribution over the
set $[n]$. In $\uniform[n]$, the ``$[n]$'' may be omitted when it is understood to be $[n]$,
where $n$ is the number of ballots in the stack. We let $\uniform[a, b]$ denote the uniform distribution over
the set $[a, b]$.  
If $X \,\sim\, \uniform[n]$, then
\[
    \\Pr[X = i] = \uniform[n](i) = 1/n \hbox{~for~} i \in [n]\ .
\]
 Thus,
$\uniform$ denotes the uniform distribution on $[n]$.
For the continuous versions of the uniform distribution: we let
$\ovuniform(0,1)$ denote the uniform distribution over the real interval
$(0,1)$, and let $\ovuniform(a,b)$ denote the uniform distribution over
the interval $(a, b)$. 
These are understood to be probability densities, not discrete distributions.
The ``$(0,1)$'' may be omitted when it is understood
to be $(0,1)$.  Thus, $\ovuniform$ denotes the uniform distribution on $(0,1)$.

We let $VD(p, q)$ denote the variation distance between probability
distributions $p$ and $q$; this is the maximum, over all events $E$,
of $$Pr_p[E]-Pr_q[E].$$

\paragraph{Election Terminology.}

The term ``ballot'' here means to a single piece of paper on which
the voter has recorded a choice for each contest for which the voter
is eligible to vote. One may refer to a ballot as a ``card.'' Multi-card ballots
are not discussed in this paper.
%
%We do not discuss multi-card ballots, for which sampling can be more
%complex.

\paragraph{Audit types.}

There are two kinds of post-election audits: \emph{ballot-polling}
audits, and \emph{ballot-comparison} audits, as described in~\cite{Lindeman-2012-gentle}.
For our purposes, these types of audits are equivalent, since they both
need to sample paper ballots at random, and can make use of the
$k$-cut method proposed here.  However, if one wishes to use $k$-cut
sampling in a comparison audit, one would need to ensure that each
paper ballot contains a printed ID number that could be used to locate
the associated electronic CVR.

%They are similar,
%except that for a comparison audit, one requires
%randomly-selected paper ballots and their cast vote records,
%which are electronic records describing the scanner's interpretation
%of that ballot.

\section{The $k$-Cut Method}
\label{sec:k-cut}

The problem to be solved is:
\begin{quote}
  How can one select a single ballot (approximately) at random from a
  given stack of $n$ ballots?
\end{quote}

This section presents the ``$k$-cut'' sampling procedure for doing
such sampling. The $k$-cut procedure does not need to know the size $n$ of the stack,
nor does it need any auxiliary random number generators or technology.
%Typical values for $k$ are $k=6$ or $k=7$, in which case the $k$-cut
%method might be called ``Six-Cut'' or ``Seven-Cut.''

We assume that the collection of ballots to be sampled from is in the
form of a stack.  These may be ballots stored in a single
box or envelope after scanning. One may think of the stack of ballots as being similar to
a deck of cards.  When the ballots are organized into \emph{multiple}
stacks, sampling is slightly more complex---see
Section~\ref{sec:multi-stack-sampling}.

For now we concentrate on the
single-stack case. We imagine that the size $n$ of the stack is
25--800 or so.

The basic operation for drawing a single ballot is called ``$k$-cut
and pick,'' or just ``$k$-cut.''  This method does $k$ cuts then draws
the ballot at the top of the stack.

To make a single cut of a given stack of $n$ paper ballots:
\begin{itemize}
\item Cut the stack into two parts: a ``top'' part and
  a ``bottom'' part.
\item Switch the order of the parts, so what was the bottom
  part now sits above the top part.  The relative order of the
  ballots within each part is preserved.
\end{itemize}
  
We let $t$ denote the size of the top part. The size $t$ of the top part
should be chosen ``fairly randomly'' from
the set $[n]=\{0, 1, 2, \ldots, n-1\}$\footnote{A cut of size $n$ is excluded,
as it is equivalent to a cut of size $0$.}.  In practice, 
cut sizes are probably not chosen so uniformly; so in this paper
we study ways to compensate for non-uniformity. We can also view the cut operation as one that ``rotates'' the stack of
ballots by $t$ positions.

\paragraph{An example of a single cut.}

As a simple example, if the given stack has $n=5$ ballots:
\[
    \framebox{A B C D E} \,,
\]
where ballot $A$ is on top and ballot $E$ is at the bottom, then a cut
of size $t=2$ separates the stack into a top part of size $2$
and a bottom part of size $3$:
\[
    \framebox{A B}~~\framebox{C D E} 
\]
whose order is then switched:
\[
    \framebox{C D E}~~\framebox{A B}\ .
\]
Finally, the two parts are then placed together to form the final stack:
\[
    \framebox{C D E A B} \,.
\]
having ballot $C$ on top.

\paragraph{Relative sizes}

We also think of cut sizes in relative manner, as a fraction of~$n$.
We let $\tau = t/n$ denote a cut size $t$ viewed as a fraction of the
stack size $n$.  Thus $0\le \tau < 1$.

\paragraph{Iteration for $k$ cuts.}

The $k$-cut procedure makes $k$ successive cuts then picks the ballot
at the top of the stack.

If we let $t_i$ denote the size of the $i$-th cut, then the net
rotation amount after $k$ cuts is

\begin{equation}
      r_k = t_1 + t_2 + \cdots + t_k \pmod{n}\ .
\end{equation}
The ballot originally in position $r_k$ (where the top ballot position
is position $0$) is now at the top of the stack. We show that even for
small values of $k$ (like $k=6$) the
distribution of $r_k$ is close to $\uniform$.

In relative terms, if we define
\[ \tau_i = t_i/n \]
and
\[ \rho_k = r_k/n\ , \]
we have that
\begin{equation}
      \rho_k = r_k / n = \tau_1 + \tau_2 + \cdots + \tau_k \pmod{1}\ .
\end{equation}

\paragraph{Drawing a sample of multiple ballots.}

To draw a sample of $s$ ballots, our
$k$-cut procedure repeats $s$ times the operation of drawing without
replacement a single ballot ``at random.''  The $s$ ballots so drawn
form the desired sample.

%We note that the stack can be reset to its original order after each ballot
%has been drawn, but it doesn't have to be. The convergence proofs
%in this paper assume that the stack is reset after each cut, purely for simplicity.
%The same proofs will hold if the stack is not reset.
%We note that the stack is \emph{not} reset to its original order after
%each ballot has been drawn.  Rather, the state of the stack after one
%ballot has been drawn is the starting state when beginning the $k$-cut
%procedure to draw the next ballot.

\paragraph{Efficiency.}

Suppose a person can make six (``fairly random'') cuts in approximately 15
seconds, and can count 2.5 ballots per second\footnote{These assumptions are based on empirical observations
during the Indiana pilot audits.}.  Then $k$-cut (with $k=6$) is
more efficient when the number of ballots that needs to be
counted is 37.5 or more. Since batch sizes in audits are
often large, $k$-cut has the potential to increase sampling speed.

For instance, assume that ballots are organized into boxes,
each of which contains at least 500 ballots. Then, when the counting
method is used, 85\% of the
time a ballot between ballot \#38 and ballot \#462 will be chosen. In
such cases, one must count at least~38 ballots from the bottom or from
the top to retrieve a single ballot. This implies that $k$-cut is more efficient
85\% of the time.

This analysis assumes that each time we retrieve a ballot, we start from the top
of the stack and count downwards. In fact, if we have to retrieve a single ballot
from each box, this is the best technique that we know of. However, let's instead
assume that we would like to retrieve $t$ ballots in each box of $n$ ballots These ballots are
chosen uniformly at random from the box; thus, in expectation, the largest ballot
position (the ballot closest to the bottom of the stack) will be $\frac{nt}{t+1}$.
One possible way to retrieve these $t$ ballots is to sort the required ballot IDs, by position,
and retrieve them in order, by making a single pass through the stack. This requires only
counting $\frac{nt}{t+1}$ ballots in total, to find all $t$ ballots. Using our estimate that a person
can count 2.5 ballots per second, this implies that if we sample $t$ ballots per box, each box
will require $\frac{nt}{2.5(t+1)}$ seconds. Using $k$-cut, we will require 15 seconds per draw,
and thus, $15t$ seconds in total.

This implies that $k$-cut is more efficient when
\begin{align*}
\frac{nt}{2.5(t+1)} &> 15t\,\\
n &> 37.5(t+1)\,.
\end{align*}

Thus, if we require 2 ballots per box ($t=2$), $k$-cut is more efficient, in expectation,
when there are at least 113 ballots per box. When $t=3$, then $k$-cut is more efficient, in expectation,
when there are at least 150 ballots per box. Since the batch sizes in audits are large,
and the number of ballots sampled per box is typically quite small, we expect $k$-cut
to show an increase in efficiency, in practice. Moreover, as the number of ballots per box increases,
the expected time taken by
standard methods to retrieve a single ballot increases. With
$k$-cut, the time it takes to select a ballot is
\emph{constant}, independent of the number of ballots in the box, assuming
that each cut takes constant time.

\paragraph{Security}
We assume that the value of $k$ is \textbf{fixed} in
advance; you can not allow the cutter to stop cutting once a ``ballot they like'' is
sitting on top.

%\authornote{I'm not sure whether this is necessary, so I've commented it out}
%In practice, a cut of size $0$ is not disallowed, particularly for the
%last cut, for the same reason.  Having a cut of size $0$ is like
%having no cut performed at all.  (Disallowing cuts of size $0$ would
%have a minor effect on the selection statistics; we ignore this effect
%in our analyses.)
%

\section{(Non)-Uniformity of Single Ballot Selection}
\label{sec:single-ballot-selection}

We begin by observing that if an auditor could perform
``perfect'' cuts, we would be done.  That is, if the auditor
could pick the size $t$ of a cut in a perfectly uniform
manner from $[n]$, then one cut would suffice
to provide a perfectly uniform distribution of the ballot
selected from the stack of size $n$. However, there is no
\emph{a priori} reason to believe that, even with
sincere effort, an auditor could pick $t$ in a perfectly uniform
manner.  

An auditor could pick a $t$ randomly from $[n]$ (or
pseudorandomly from $[n]$), and then count down in the stack until he
reach ballot~$t$.  But this ``counting down'' procedure is precisely
what we are trying to eliminate!

So, we start by studying the properties of the $k$-cut procedure for
single-ballot selection, beginning with a study of the non-uniformity
of selection for the case $k=1$ and extending 
our analysis to multiple cuts. 
%To learn what cut sizes actually look like in practice, we
%experimented by making cuts in a stack of ballots.

\subsection{Empirical Data for Single Cuts}

This section presents our experimental data on single-cut sizes.
We find that in practice, single cut sizes (that is, for $k=1$) are
``somewhat uniform.''  We then show that the approximation to
uniformity improves dramatically as $k$ increases.

We had two subjects (the authors).  Each author had a stack of 150
sequentially numbered ballots to cut. Marion County, Indiana, kindly provided surplus ballots for us to work with.
The authors made 1680 cuts in total.
Table~\ref{table:combined} shows the observed cut size
frequency distribution.

\begin{table}
          \centering
          \begin{tabular}{|r||r|r|r|r|r|r|r|r|r|r||r|}          \hline
          &     0  &     1  &     2  &     3  &     4  &     5  &     6  &     7  &     8  &     9  & Row Sum  \\ 
         \hline\hline 
    0  &        0  &     0  &     0  &     2  &     3  &     5  &     5  &     6  &    10  &     7  &     38 \\ \hline
   10  &       10  &     6  &    11  &    12  &    16  &    10  &    11  &    16  &    12  &    16  &    120 \\ \hline
   20  &       21  &    22  &     7  &    18  &    25  &    15  &    25  &    21  &    18  &    16  &    188 \\ \hline
   30  &       16  &    23  &    15  &    20  &    19  &    19  &    15  &    16  &    20  &    20  &    183 \\ \hline
   40  &       18  &    17  &    22  &    24  &    12  &    17  &    17  &    20  &    25  &    28  &    200 \\ \hline
   50  &       16  &    13  &    17  &    17  &    17  &    20  &    14  &    16  &    27  &    13  &    170 \\ \hline
   60  &       15  &    17  &    14  &    13  &    14  &    14  &    13  &    13  &    17  &    16  &    146 \\ \hline
   70  &       10  &     9  &     8  &    10  &    14  &    16  &    14  &    21  &    25  &    11  &    138 \\ \hline
   80  &       13  &    11  &    11  &     5  &    14  &    14  &    14  &     8  &    15  &    12  &    117 \\ \hline
   90  &       13  &     9  &    17  &    19  &    10  &     6  &    14  &     6  &     2  &     4  &    100 \\ \hline
  100  &       12  &     8  &    10  &     8  &     5  &    10  &     6  &    11  &     9  &     9  &     88 \\ \hline
  110  &        4  &     9  &     9  &     8  &     4  &     9  &     6  &     9  &     7  &     9  &     74 \\ \hline
  120  &       10  &     7  &     6  &     5  &     4  &     6  &     8  &     5  &     6  &     3  &     60 \\ \hline
  130  &        4  &     4  &     8  &     4  &     6  &     0  &     4  &     6  &     2  &     4  &     42 \\ \hline
  140  &        1  &     3  &     2  &     4  &     0  &     2  &     3  &     0  &     0  &     1  &     16 \\ [1ex]

          \hline
          \end{tabular}
\caption{Empirical distribution of sizes of single cuts,
using combined data from both authors, with 1680 cuts total.
For example, ballot 3 was on top twice after one cut. (Note that the
initial top ballot is ballot 0.)}
\label{table:combined}
\end{table}

%\begin{figure}[!htbp]
%\centering
% \includegraphics[width=10cm, height=6cm,keepaspectratio]{images/empirical_data}
%
%\caption{Probability Density of empirical distribution of sizes of single cuts,
%using combined data from both authors, with 1680 cuts total.}
%\label{combined_data}
%\end{figure}

If the cuts were truly random, we would expect a
uniform distribution of the number of cuts observed as a function of
cut size. In practice, the frequency of cuts was not evenly distributed; there
were few or no very large or very small cuts, and smaller
cuts were more common than larger cuts.
  
\subsection{Models of Single-Cut Selection}

Given the evident non-uniformity of the single-cut sizes in our
experimental data, it is of interest to model their distribution.
Such models allow generalization to other stack sizes, and support the
study of convergence to uniformity with iterated cuts. In Figure~\ref{plot:empirical_model},
we can observe the probability density of the empirical distribution, compared
to different models.

\begin{figure}
\centering
\includegraphics[width=10cm, height=6cm,keepaspectratio]{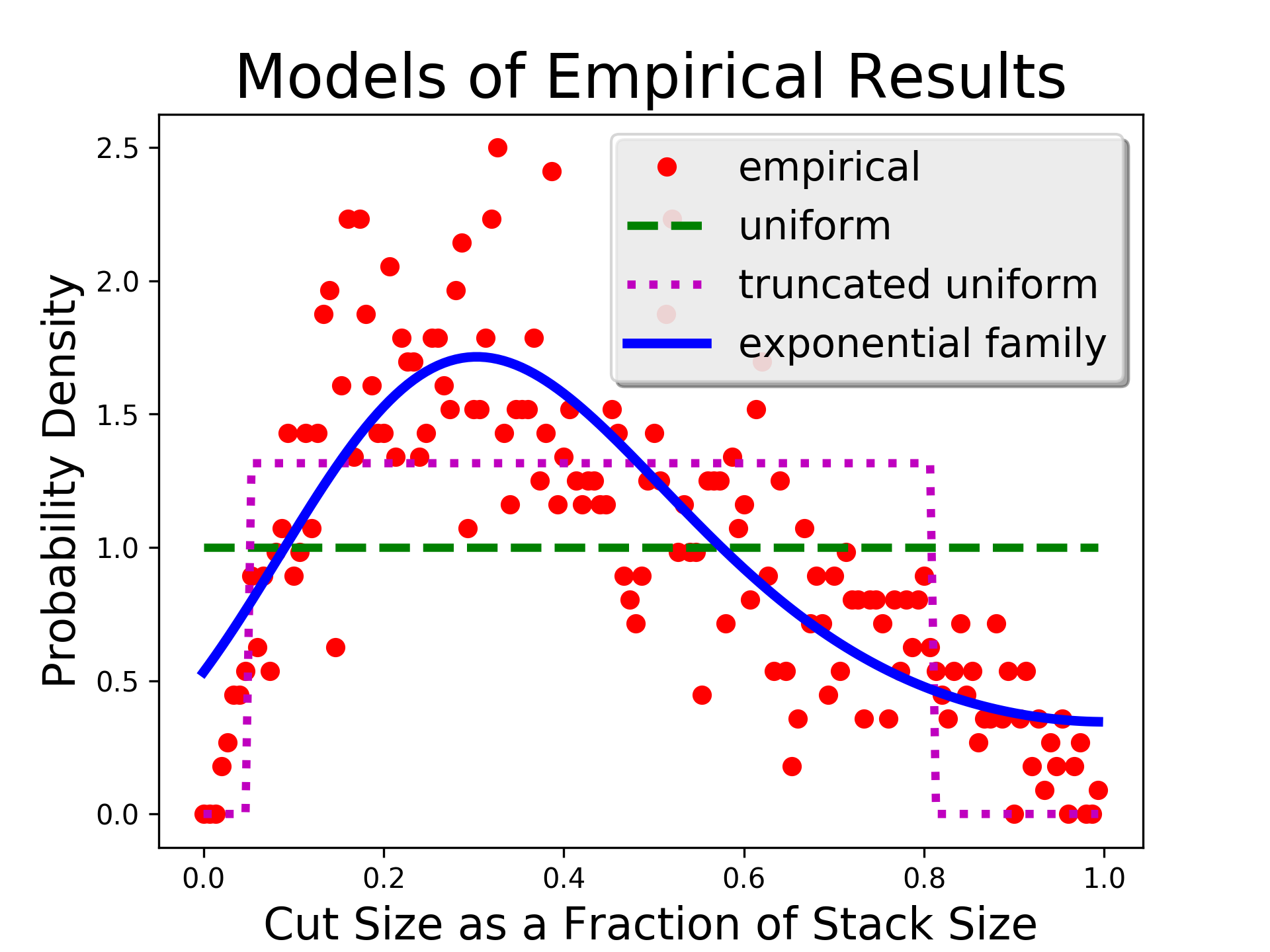}
\caption{Two models for cut sizes for a single cut, based on
the data of Table~\ref{table:combined}.
The horizontal axis is the size of the cut $\tau$ as a fraction of the
size of the stack of ballots.
The vertical axis is the probability density at that point.
For reference the uniform density $\ovuniform$ (shown in green) has a
constant value of $1$.
The red dots show the empirical distribution being modeled, which
is clearly not uniform.
The purple line shows our first model: the truncated uniform density 
$\ovuniform(0.533, 0.813)$
on the interval $8/150 \le \tau < 122/150$.
This density has a mean absolute error of 0.384
compared to the empirical density, 
and a mean squared error of 0.224.
The blue line shows our second model:
the density function from the model $\ovexpfam$ of
equation~(\ref{eqn:F}), which fits a bit better, giving
a mean absolute error of 0.265
and a mean squared error of 0.114.
}
\label{plot:empirical_model}
\end{figure}

We let $\empirical$ denote the observed empirical distribution on
$[n]$ of single-cut sizes, and let $\ovempirical$ denote the corresponding
induced continuous density function on $(0,1)$, of relative cut sizes.

We consider two models of the probability distribution of cut sizes
for a single cut.  

The reference case (the ideal case) is \textbf{the uniform model},
where $t$ is chosen uniformly at random from $[n]$ for the discrete
case, or when $\tau$ is chosen uniformly at random from the real
interval $(0, 1)$ for the continuous case.  We denote these cases as
$t \,\sim\, \uniform[n]$ or $\tau \,\sim\, \ovuniform(0, 1)$,
respectively.

We can define two different non-uniform models to reflect the observed data.

\begin{itemize}

\item \textbf{The truncated uniform model.}
  This model has two parameters: $w$ (the least cut size possible
  in the model) and $b$ (the number of different possible cut sizes).
  The cut size $t$ is chosen uniformly at random
  from the set $[w, w+b]= \{w, w+1, \ldots, w+b-1\}$.
  We denote this case as
  $t \,\sim\, \uniform[w, w+b]$ (for the discrete version)
  or
  $\tau \,\sim\, \ovuniform(w/n, (w+b)/n)$
  (for the continuous version).

\item An \textbf{exponential family} model.
  Here the density of relative cut sizes is modeled as
  $\mathcal{F}(\tau) = \exp(f(\tau))$,
  where $\tau$ is the relative cut size and
  $f$ is a polynomial (of degree three in our case).
\end{itemize}

\paragraph{Fitting models to data.}

We used standard methods to find least-squares best-fits
for the experimental data of Table~\ref{table:combined}
to models 
from the truncated uniform family and
from the exponential family based on cubic polynomials.

\paragraph{Fitted model - truncated uniform distribution.}

We find that choosing $w=8$ and $b=114$ provides the best
least-squares fit to our data.  This corresponds to a uniform
distribution
$t \,\sim\, \uniform[8,122]$
or
$\tau \,\sim\, \ovuniform(0.00667, 0.813)$.

\paragraph{Fitted model - exponential family.}

Using least-squares methods, we found a model from the exponential
family for the probability density of relative cut sizes for a single
cut, based on an exponential of a cubic polynomial of the relative cut
size $\tau$.

The model defines
\begin{equation}
   f(\tau) = -0.631 + 8.587 \tau - 18.446 \tau^2 + 9.428\tau^3\ 
\label{eqn:exponential-f}
\end{equation}
and then uses
\begin{equation}
   \ovexpfam(\tau) = \exp(f(\tau))
\label{eqn:F}
\end{equation}
as the density function of the exponential family function 
defined by $f$.
We can see in Figure~\ref{plot:empirical_model} that
this seems to fit our empirical observations quite well.

\subsection{Making $k$ successive cuts to select a single ballot}
\label{sec:making-k-successive-cuts}

As noted, the distribution of cut sizes for a single cut is noticeably
non-uniform. Our proposed $k$-cut procedure addresses this by iterating the
single-cut operation $k$ times, for some small fixed integer $k$.
This section discusses the
iteration process and its consequences. In later sections, we
discuss applying further mitigation methods to handle
any residual non-uniformity.

We assume for now that successive cuts are independent.
Moreover, we assume that sampling is done with replacement, for simplicity.
Using these assumptions, we provide computational results showing that as the number of cuts
increases, the $k$-cut procedure selects ballots with a distribution
that approaches the uniform distribution, for our empirical data, as well as our
fitted models. We compare by computing the
variation distance of the $k$-cut distribution from $\uniform$ for various $k$. 
We also computed $\epsilon$, the maximum
ratio of the probability of any single ballot under the empirical
distribution, to the probability of that ballot under the uniform
distribution, minus one\footnote{In Section~\ref{sec:empirical-support}, we
discuss why this value of $\epsilon$ is relevant}. Our results are summarized in
Table~\ref{table:vd-and-eps-versus-k}.

\begin{table}[!htbp]
\centering
\begin{tabular}{|c||c|c|c||c|c|c|} 
 \hline
     &  \multicolumn{3}{|c||}{Variation Distance} & 
        \multicolumn{3}{|c|}{Max Ratio minus one} \\ 
 \hline
 $k$ &
 $\empirical_k$ &
 $\uniform_k[w,w+b]$ &
 $\expfam_k$ &
 $\empirical_k$ &
 $\uniform_k[w,w+b]$ &
 $\expfam_k$
 \\
   \hline\hline
      1 &           0.247 &            0.24 &           0.212  &             1.5 &           0.316 &           0.707 \\
      2 &          0.0669 &          0.0576 &          0.0688  &           0.206 &           0.316 &           0.212 \\
      3 &          0.0215 &          0.0158 &          0.0226  &          0.0687 &          0.0315 &          0.0706 \\
      4 &          0.0069 &         0.00444 &         0.00743  &          0.0224 &          0.0177 &          0.0233 \\
      5 &         0.00223 &         0.00126 &         0.00244  &         0.00699 &         0.00311 &         0.00767 \\
      \textbf{6} &        \textbf{0.000719} &        \textbf{0.000357} &        \textbf{0.000802}  &         \textbf{0.00225} &         \textbf{0.00128} &         \textbf{0.00252} \\
      7 &        0.000232 &        0.000102 &        0.000264  &        0.000729 &        0.000284 &        0.000828 \\
      8 &        7.49e-05 &        2.92e-05 &        8.67e-05  &        0.000235 &        9.87e-05 &        0.000272 \\
      9 &        2.42e-05 &        8.35e-06 &        2.85e-05  &        7.59e-05 &        2.47e-05 &        8.95e-05 \\
     10 &        7.79e-06 &        2.39e-06 &        9.36e-06  &        2.45e-05 &        7.83e-06 &        2.94e-05 \\
     11 &        2.52e-06 &        6.86e-07 &        3.08e-06  &         7.9e-06 &        2.09e-06 &        9.67e-06 \\
     12 &        8.12e-07 &        1.97e-07 &        1.01e-06  &        2.55e-06 &        6.32e-07 &        3.18e-06 \\
     13 &        2.62e-07 &        5.64e-08 &        3.32e-07  &        8.23e-07 &        1.74e-07 &        1.04e-06 \\
     14 &        8.45e-08 &        1.62e-08 &        1.09e-07  &        2.66e-07 &        5.14e-08 &        3.43e-07 \\
     15 &        2.73e-08 &        4.63e-09 &        3.59e-08  &        8.57e-08 &        1.44e-08 &        1.13e-07 \\
     16 &         8.8e-09 &        1.33e-09 &        1.18e-08  &        2.77e-08 &         4.2e-09 &        3.71e-08 \\[1ex]
 \hline
\end{tabular}
\caption{
  Convergence of $k$-cut to uniform with increasing $k$.
  Variation distance from uniform and $\epsilon$-values for~$k$ cuts, as a function
  of~$k$, for $n=150$, where~$\epsilon$ is one less than the maximum ratio of the probability of
  selecting a ballot under the assumed distribution to the probability
  of selecting that ballot under the uniform distribution.
  The second through seventh column headings describe
  probability distribution of single-cut sizes convolved
  with themselves~$k$ times to obtain the~$k$-th row.
  Columns two and five give results for the
  distribution $\empirical_k$ equal to the 
  $k$-fold iteration of single cuts that have the distribution of
  the empirical data of Table~\ref{table:combined}.
  Columns three and six gives results 
  for the distribution $\uniform_k[w,b]$ equal to the
  $k$-fold iteration of single cuts that have the distribution
  $\uniform[8,122]$ that is the best fit of this class to the
  empirical distribution $\empirical$.
  Columns four and seven gives results
  for the distribution $\expfam_k$ equal to the 
  $k$-fold iteration of single cuts that have the distribution
  described in equations~(\ref{eqn:exponential-f}) and~(\ref{eqn:F}).
  The row for $k=6$ is bolded, since we will show that with our mitigation procedures,
  6 cuts is ``close enough'' to random.}
\label{table:vd-and-eps-versus-k}
\end{table}

We can see that, after six cuts, we get a variation distance of
about $7.19 \times 10^{-4}$, for the empirical distribution,
which is often small enough to justify our recommendation that
six cuts being ``close enough" in practice, for any RLA.

\subsection{Asymptotic Convergence to Uniform with $k$}
\label{sec:asymptotic-convergence}

As $k$ increases, the distribution of cut sizes provably approaches the uniform distribution, under
mild assumptions about the distribution of cut sizes for a single cut
and the assumption of independence of successive cuts.

This claim is plausible, given the analysis of similar situations for
continuous random variables.  For example, Miller and
Nigrini~\cite{Miller-2007-modulo-1-CLT} have analyzed the summation of
independent random variables modulo~$1$, and given necessary and
sufficient conditions for this sum to converge to the uniform
distribution.

For the discrete case, one can show that if once $k$ is large enough
that every ballot is selected by $k$-cut with some positive
probability, then as $k$ increases the distribution of cut sizes for
$k$-cut approaches $\uniform$.
Furthermore, the rate of
convergence is exponential.  The proof details are omitted here; however, the second
claim uses Markov-chain arguments, where each rotation amount is a
state, and the fact that the transition matrix is doubly stochastic.

\section{Approximate Sampling}
\label{sec:approximate-sampling}

We have shown in the previous section that as we iterate our $k$-cut
procedure, our distribution becomes quite close to the uniform distribution.
However, our sampling still is not exactly uniform. 

The literature on post-election audits generally assumes that
sampling is perfect. One exception is the paper by Banuelos and
Stark~\cite{Banuelos-2012-zombies}, which suggests dealing
conservatively with the situation when one can not find a ballot in an
audit, by treating the missing ballot as if it were a vote for the
runner-up. Our proposed mitigation procedures are similar in flavor.

In practice, sampling for election audits is often done using software
such as that by Stark~\cite{Stark-2017-ballot-polling-tools} or
Rivest~\cite{Rivest-2011-sampler}. Given a random
seed and a number $n$ of ballots to sample from, they can generate a
pseudo-random sequence of integers from $[n]$, indexing into a list of ballot
positions or ballot IDs. It is reasonable to treat such cryptographic sampling methods as
``indistinguishable from sampling uniformly,'' given the strength of
the underlying cryptographic primitives.
%Such sampling methods are based on
%the use of cryptographic primitives such as SHA-256.  

However, in this paper we deal with sampling that is not perfect;
the $k$-cut method with $k=1$ is obviously non-uniform, and even
with modest $k$ values, as one might use in practice, there will be
some small deviations from uniformity.

Thus, we address the following question:
\begin{quote}
  How can one effectively use an approximate sampling procedure in
  a post-election audit?
\end{quote}

We let $\actual$ denote the actual (``approximate'') probability
distribution over~$[n]$ from the sampling method chosen for the
audit. Our analyses assume that we have some bound on how close $\actual$ is
to $\uniform$, like variation distance. Furthermore, the quality of the approximation may be controllable,
as it is with $k$-cut: one can improve the closeness to uniform by
increasing $k$. We let $\actual^s$ denote the distribution on $s$-tuples of ballots
from $[n]$ chosen with replacement according to the distribution
$\actual$ for each draw.

%, and let $\ovactual$ denote the corresponding density function
%on $(0,1)$ of relative cut sizes.

%\authornote{I thought we were doing this with replacement, not without?}
% (conditioned on restricting the draw to the
% remaining ballots, and possibly in an non-independent manner).

%The main questions related to the use of approximate sampling are thus:
%\begin{itemize}
%\item If we can control the closeness of $\actual$ to $\uniform$, how
%      close should we make $\actual$ to $\uniform$?
%    \item For a given closeness, how can we make sure that the audit is not
%      affected by any residual non-uniformity?
%  \end{itemize}
%
%
%\section{Mitigation Strategies Overview}
%\label{sec:mitigation-strategies}

\section{Auditing Arbitrary Contests}
\label{sec:auditing-arbitrary-contests}

This section proves a general result: for auditing an arbitrary
contest, we show that \emph{any}
risk-limiting audit can be adapted to work with approximate sampling,
if the approximate sampling is close enough to uniform. In particular, any RLA
can work with the $k$-cut method, if $k$ is large enough.

%We let $\actual$ denote the single-ballot sampling distribution used in the audit.
We show that if $k$ is sufficiently large, the resulting
distribution of $k$-cut sizes will be so close to uniform that any
statistical procedure cannot efficiently distinguish between the two.
That is, we want to choose $k$ to guarantee that
$\uniform$ and $\actual$ are close enough, so that any statistical
procedure behaves similarly on samples from each.

Previous work done by Baign\`eres in
\cite{BaigneresVaudenay-2008-Complexity-of-distinguishing-distributions}
shows that, there is an optimal distinguisher between two finite
probability distributions, which depends on the KL-Divergence between
the two distributions.

We follow a similar model to this work,
however, we develop a bound based on the variation distance between
$\uniform$ and $\actual$.

\subsection{General Statistical Audit Model}
\label{sec:general-statistical-audit-model}

We construct the following model, summarized in Figure \ref{thm-overview}.

\begin{figure}[!htbp]
\centering
% old version 
% \includegraphics[width=20cm, height=12cm,keepaspectratio]{images/gen_thm_overview}

% new version (seems OK, but causes Ron's Adobe PDF reader to bomb out !?
% so commented out here for now
% this is not legible on my PDF
 \includegraphics[width=10cm, height=6cm,keepaspectratio]{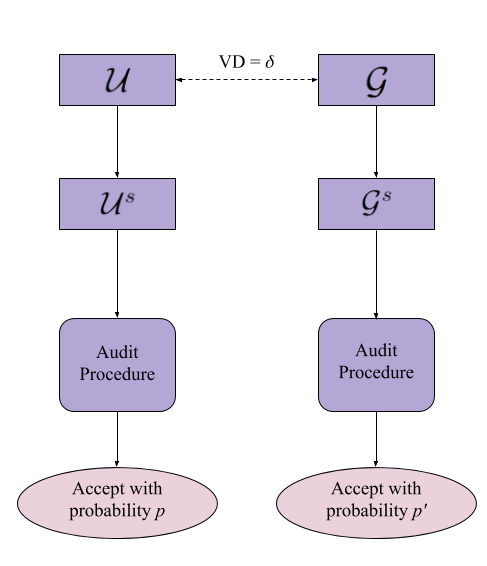}

\caption{Overview of uniform vs. approximate sampling effects, for any
statistical auditing procedure.
The audit procedure can be viewed as a distinguisher between the two
underlying distributions.  If it gives significantly different results for the
two distributions, it can thereby distinguish between them. 
%That is, the audit
%is a statistical procedure that takes in input from two underlying distributions.
%If the distributions are close enough, we could use the audit to determine
%whether the original distribution was uniform or not. 
However, if $p$ and $p'$ are
extremely close, then the audit cannot be used as a distinguisher.
%\authornote{Don't use white text on black background.}
%\authornote{Notation: use s in superscript.}
}
\label{thm-overview}
\end{figure}

We define $\delta$ to be the variation distance between $\actual$ and
$\uniform$.  We can find an upper bound for $\delta$ empirically, as seen in 
Table~\ref{table:vd-and-eps-versus-k}.
If $\actual$ is the distribution of $k$-cut, then by increasing
$k$ we can make $\delta$ arbitrarily small.

The audit procedure requires a sample of some given size $s$, from $\uniform^{s}$ or $\actual^{s}$.
We assume that all audits behave deterministically. We do not
assume that successive draws are independent, although we assume that each
cut is independent.

%That is, given the same sample of ballots,
%the audit procedure will return the same outcome every time. 
%When we are sampling each ballot from the uniform distribution, we 
%denote the probability distribution over all possible size-$s$ samples
%as $\uniform^s$. When we are sampling from $\actual$, we denote the probability
%distribution over all possible size-$s$ samples as $\actual^s$

Given the size $s$ sample, the audit procedure can make a decision on
whether to accept the reported contest result, escalate the audit, or
declare an upset.
%We assume that the audit will behave
%deterministically.  

\subsection{Mitigation Strategy}
When we use approximate sampling, instead of uniform sampling,
we need to ensure that the ``risk-limiting'' properties of the RLAs are 
maintained. In particular, as described in~\cite{Lindeman-2012-gentle}, an RLA with a risk limit of $\alpha$
guarantees that with probability at least $(1-\alpha)$ the audit will
find and correct the reported outcome if it is incorrect. We want to
show that we can maintain this property, while
introducing approximate sampling.

Without loss of generality, we focus on the
probability that the audit accepts the reported contest
result, since it is the case
where approximate sampling 
may affect the risk-limiting properties.
We show that
$\actual$ and $\uniform$ are sufficiently close
when $k$ is large enough,
that the
difference between $p$ and $p'$, as seen in Figure \ref{thm-overview}, is small. 

We show a simple
mitigation procedure, for RLA plurality elections, to compensate for
this non-uniformity, that we denote
as \textbf{risk-limit adjustment.}
For RLAs, we can simply
decrease the risk limit $\alpha$ by $|p'-p|$ (or an upper bound on this)
to account for the difference.
This decrease in the risk limit can accommodate the risk that
the audit behaves incorrectly due to approximate sampling.

\subsection{How much adjustment is required?}

%\section{A Tighter Bound for Mitigation}
%\label{sec:tighter-bound}
%
%In the previous section, we have outlined a somewhat intuitive bound for risk limit adjustment. However,
%based on the empirical data, we would require at least 10 cuts, in practice. Thus, in this section, we outline
%a proof for a tighter bound, which allows us to recommend $k=6$ cuts and require a risk limit adjustment
%of approximately 1\%. 
%
%\subsection{\Problem Setup}
We assume we have an auditing procedure $\mathbb{A}$, which accepts samples and outputs ``accept'' or ``reject''.
We model approximate sampling with
providing $\mathbb{A}$ samples from a distribution $\actual$. For our analysis, we look at the empirical distribution
of cuts in Table \ref{thm-overview}. For uniform sampling, we provide $\mathbb{A}$ samples from $\uniform$.

We would like to show that the probability that $\mathbb{A}$ accepts an outcome incorrectly, given samples from $\actual$ is not
much higher than the probability that $\mathbb{A}$ accepts an incorrect outcome, given samples from $\uniform$. We denote
$\mathbb{B}$ as the set of ballots that we are sampling from. 
%
%For $k$-cut, we can
%model this by assuming that we reset the deck to its original order between each cut. We note that the analysis for the case where
%we don't reset the deck provides the same bound, but is a bit trickier to analyze.
%We would like to analyze the probability that $\mathbb{A}$ returns different outcomes when sampling from $\mathbb{E}$ and when
%sampling from $\uniform$. 
%We want to compute a bound on the change in probability that $\mathbb{A}$ accepts given
%samples from $\actual$, compared to the probability that $\mathbb{A}$ accepts given samples from $\uniform$. 

\begin{theorem}
Given a fixed sample size $s$ and the variation distance $\delta$, the maximum change in probability that $\mathbb{A}$
returns ``accept'' due to approximate sampling is at most $$\epsilon_1 + (1+n\delta)^{s'} - 1,$$ where $s'$ is the maximum number of ``successes''
seen in $s$ Bernoulli trials, where each has a success probability of $\delta$, with probability at least $1-\epsilon_1$.
\end{theorem}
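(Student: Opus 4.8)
The plan is to control the one-sided quantity
\[
  \actual^s(A)-\uniform^s(A),
\]
where $A\subseteq\B^s$ is the (fixed, since $\mathbb{A}$ is deterministic) set of $s$-samples on which $\mathbb{A}$ outputs ``accept''; this is the direction that endangers the risk-limiting guarantee. First I would extract the single-draw consequence of the hypothesis $VD(\actual,\uniform)=\delta$: applying the definition of $VD$ to the singleton event $\{i\}$ gives $\actual(i)-\uniform(i)\le\delta$, and since $\uniform(i)=1/n$ this rearranges to the multiplicative ceiling
\[
  \actual(i)\le(1+n\delta)\,\uniform(i)\qquad\text{for every ballot }i\in\B .
\]
This per-draw likelihood-ratio bound is exactly the base $1+n\delta$ that will be raised to the power $s'$.

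Next I would separate each draw into a ``typical'' and a ``deviating'' case by writing $\actual=(1-\delta)\nu_0+\delta\,\nu_1$, where $\nu_1$ carries the excess mass of $\actual$ over $\uniform$, so that each draw is a Bernoulli trial that ``deviates'' with probability $\delta$ and otherwise looks uniform. Because the paper assumes only that each \emph{cut} is independent, I would not assume the $s$ draws are independent; instead I would argue by stochastic domination, noting that each draw, conditioned on the earlier draws, still has variation distance at most $\delta$ from uniform, so its conditional deviation probability is at most $\delta$. Hence the number $D$ of deviating draws is stochastically dominated by a $\mathrm{Binomial}(s,\delta)$ variable, and by the definition of $s'$ the good event $T=\{D\le s'\}$ satisfies $\actual^s(T)\ge 1-\epsilon_1$.

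The last step is a change of measure conditioned on $T$. I would split
\begin{equation}
  \actual^s(A)=\actual^s(A\cap T)+\actual^s(A\cap T^{c}),
\end{equation}
bounding the second term by $\actual^s(T^{c})\le\epsilon_1$. On $T$ at most $s'$ draws deviate, and only a deviating draw contributes a likelihood-ratio factor exceeding $1$; combining this with the ceiling $1+n\delta$ gives $\actual^s(x)\le(1+n\delta)^{s'}\,\uniform^s(x)$ for every $x\in T$, whence $\actual^s(A\cap T)\le(1+n\delta)^{s'}\,\uniform^s(A)$. Subtracting $\uniform^s(A)$ and using $\uniform^s(A)\le 1$ yields
\begin{equation}
  \actual^s(A)-\uniform^s(A)\le\bigl((1+n\delta)^{s'}-1\bigr)\uniform^s(A)+\epsilon_1\le\epsilon_1+(1+n\delta)^{s'}-1 ,
\end{equation}
which is the claimed bound.

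I expect the main obstacle to be making the two halves of the ``deviating draw'' accounting simultaneously rigorous, since the two natural definitions pull in opposite directions: the mixture decomposition cleanly forces the deviation probability to equal $\delta$, but then the matched part $\nu_0$ can carry a likelihood ratio slightly above $1$; conversely, declaring a draw ``deviating'' exactly when it lands on a ballot with $\actual(i)>\uniform(i)$ keeps every typical factor at most $1$ but raises the per-draw deviation probability above $\delta$. Reconciling these so that precisely the $\delta$-probability deviations (and no others) are charged the factor $1+n\delta$ is the crux, and handling the dependence between successive draws through conditional stochastic domination, rather than a clean product likelihood ratio, is the secondary technical point.
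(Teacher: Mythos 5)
Your proposal is correct and follows essentially the same route as the paper: the paper likewise models each draw from $\actual$ as a uniform draw that is ``switched'' with probability $\delta$ (your mixture $\actual=(1-\delta)\nu_0+\delta\nu_1$, with the paper taking $\nu_0=\uniform$ and $\nu_1=\mathbb{F}$), bounds the number of switched draws by $s'$ via the Binomial$(s,\delta)$ tail at cost $\epsilon_1$, and on the good event charges each switched position the per-draw ratio $\bigl(\tfrac{1}{n}+\delta\bigr)\big/\tfrac{1}{n}=1+n\delta$ to obtain $\Pr[S\mid\actual,\ \le s'\ \text{switches}]\le(1+n\delta)^{s'}\Pr[S\mid\uniform]$ before assembling the identical final bound $\epsilon_1+(1+n\delta)^{s'}-1$. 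Even the difficulty you flag as the crux---ensuring that exactly the $\delta$-probability deviations, and no others, are charged the factor $1+n\delta$---is the same point the paper negotiates by conditioning on the (uniformly weighted) set of candidate switch positions, so your attempt matches the paper's proof in both structure and level of rigor.
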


\begin{proof}
We define $s$ as the number of ballots that we pull from the set of cast ballots, before deciding whether or not to accept the outcome
of the election. Given a sample size $s$, based on our sampling technique, we draw
$s$ ballots, one at a time, from $\actual$ or from $\uniform$.

We model drawing a ballot from $\actual$ as first drawing a ballot from $\uniform$; however, with probability $\delta$, we replace the ballot
we draw from $\uniform$ with a new ballot from $\mathbb{B}$ following a distribution $\mathbb{F}$. We make no further assumptions
about the distribution $\mathbb{F}$, which aligns with our definition of variation distance. When drawing from $\actual$, for any ballot $b \in \mathbb{B}$, we have probability
at most $\frac{1}{n} + \delta$ of drawing $b$.

When we sample sequentially, we get a length-$s$ sequence of ballot IDs, $S$, for each of $\actual$ and $\uniform$. 
Throughout this model, we assume that we sample with replacement, although similar bounds should hold for sampling without replacement, as well. We define $X$ as the list of indices in the sequence $S$ where both $\actual$ and $\uniform$ draw the same ballot,
in order. We define $Z$ as the list of indices where $\actual$ has ``switched'' a ballot after the initial draw.
%For simplicity, we assume that each ballot has a unique identifier - a ballot ID - that we can use to represent the ballot itself. 
That is, for a fixed draw, $\uniform$ might produce the sample sequence [1, 5, 29]. Meanwhile, $\actual$
might produce the sample sequence sequence [1, 5, 30]. For this example, $X$ = [0, 1] and $Z$ = [2].

We define the set of possible size-$s$ samples as the set $D$. 
We choose $s'$ such that for any given value $\epsilon_1$, 
the probability that $|Z|$ is larger than $s'$ is at most $\epsilon_1$. Using this set up, we can calculate an upper bound on the probability that
$\mathbb{A}$ returns ``accept''. In particular, given the empirical distribution, the probability that $\mathbb{A}$ returns ``accept'' for a deterministic
auditing procedure becomes
$$\Pr[\mathbb{A}\ accepts\ |\ \actual] = \sum_{S \in D} \Pr[\mathbb{A}\ accepts\ |\ S] * \Pr[draw\ S\ |\ \actual]\,.$$

Now, we note that we can split up the probability that we can draw a specific sample $S$ from the distribution $\actual$. We know that with high
probability, there are at most $s'$ ballots being ``switched''. Thus,
$$\Pr[\mathbb{A}\ accepts\ |\ \actual]$$
$$= \sum_{S \in D} \Pr[\mathbb{A}\ accepts\ |\ S] * \Pr[draw\ S\ |\ \actual, S\ has\ \leq s'\text{ ``switched'' ballots}] * \Pr[\text{S has }\leq s'\text{ ``switched'' ballots}] $$
$$+ \sum_{S \in D} \Pr[\mathbb{A}\ accepts\ |\ S] * \Pr[draw\ S\ |\ \actual, S\ has\ >s'\text{ ``switched'' ballots}] * \Pr[\text{S has }> s'\text{ ``switched'' ballots}]\,.$$

\noindent Now, we note that the second term is upper bounded by
%$$\sum_{S \in D} \Pr[\mathbb{A}\ accepts\ |\ S] * \Pr[draw\ S\ |\ \actual, S\ has\ >s'\text{ ``switched'' ballots}]  * \Pr[\text{S has }>s'\text{ ``switched'' ballots}]$$
$$\Pr[\text{any size-s sample has more than }s'\text{ switched ballots}]\,.$$

\noindent We define the probability that any size-$s$ sample contains more than $s'$ switched ballots as $\epsilon_1$.

We note that, although the draws aren't independent, from the definition of variation distance, this is upper bounded by the probability
that a binomial distribution, with $s$ draws and $\delta$ probability of success.

Now, we can focus on bounding the first term.
%Moreover, we assume that each time a ballot is ``switched'', for a given sample, the probability of recovering
%exactly that sample increases by $\delta$.
We know that 
$$\Pr[\mathbb{A}\ accepts\ |\ \actual, \text{any sample has at most }s'\text{ switched ballots}]$$
$$= \sum_{S \in D} \Pr[\mathbb{A}\ accepts\ |\ S] * \Pr[draw\ S\ |\ \actual, S\ has\ \leq s'\text{ ``switched'' ballots}]$$

\noindent Meanwhile, for the uniform distribution, we know that the probability of accepting becomes
$$\Pr[\mathbb{A}\ accepts\ |\ \uniform] = \sum_{S \in D} \Pr[\mathbb{A}\ accepts\ |\ S] * \Pr[draw\ S\ |\ \uniform]\,.$$

\noindent Thus, we know that the change in probability becomes
$$\Pr[\mathbb{A}\ accepts\ |\ \actual] - \Pr[\mathbb{A}\ accepts\ |\ \uniform]\,$$
$$\leq \epsilon_1 + \sum_{S \in D} \Pr[\mathbb{A}\ accepts\ |\ S] (\Pr[draw\ S\ |\ \actual, S\ has\ \leq s'\text{ ``switched'' ballots}] - \Pr[draw\ S\ |\ \uniform])\,.$$

However, for any fixed sample $S$, we know that we can produce $S$ from $E$ in many possible ways. That is, we know that we have to draw at least $s-s'$
ballots that are from $\uniform$. Then, we have to draw the compatible $s'$ ballots from $\actual$. 
\noindent  In general, we define
the possible length $s-s'$ compatible shared list of indices as the set $\mathbb{X}$. That is, by conditioning on $\mathbb{X}$, we are now defining the exact
indices in the sample tally where the uniform and empirical sampling can differ. We note that $|\mathbb{X}| = {s \choose s'}$ and each possible set happens with equal probability.
Then, for any specific $x \in \mathbb{X}$, we can define $z$ as the remaining indices, which are allowed to differ from uniform
and approximate sampling. That is, if there are 3 ballots in the sample, and $x=[0, 1]$, then $z=[2]$. 

We can now calculate the probability that we draw some specific size-$s$ sample $S$, given the empirical distribution, and a fixed value of $s'$. 

$$\Pr[draw\ S\ |\ \actual] =  \sum_{x \in \mathbb{X}} \Pr[draw\ x\ |\ \uniform] * \Pr[draw\ z\ |\ \actual] * \Pr[\text{switched ballots are at indices in }z]$$

However, we know that for each ballot $b$ in $z$, we draw ballot $b$ with probability at most $\frac{1}{n} + \delta$, or $\frac{1+n\delta}{n}$. That is, for any ballot in $x$,
we know that we draw it with uniform probability exactly. However, for a ballot $b$ in $z$, we know that this a ballot that may have been ``switched''. In particular, with
probability $\frac{1}{n}$, we draw the correct ballot from $\uniform$. However, in addition to this, with probability $\delta$, we replace it with a new ballot - we assume that we
replace it with the correct ballot with probability 1. Thus, with probability at most $\frac{1}{n} + \delta$, we draw the correct ballot for this particular slot.  Thus, we get

\begin{align*}
& \Pr[draw\ S\ |\ \actual]\,\\
&=  \sum_{x \in \mathbb{X}} \Pr[draw\ x\ |\ \uniform] * \Pr[draw\ z\ |\ \actual] * \Pr[\text{switched ballots are at indices in }z]\,\\
&\leq  \sum_{x \in \mathbb{X}} \Pr[draw\ x\ |\ \uniform] * (\frac{1+n\delta}{n})^{s'} * \Pr[\text{switched ballots are at indices in }z]\,\\
&\leq (1+n\delta)^{s'} \sum_{x \in \mathbb{X}} \Pr[draw\ x\ |\ \uniform] * \Pr[draw\ z\ |\ \uniform]* \Pr[\text{switched ballots are at indices in }z]\,.
\end{align*}

Now, we note that there are ${s \choose s'}$ possible sequences $x \in \mathbb{X}$, where the ``switched'' ballots could be. Each of these possible sequences occurs with equal
probability, this becomes
\begin{align*}
&\Pr[draw\ S\ |\ \actual]\,\\
&\leq (1+n\delta)^{s'} \sum_{x \in \mathbb{X}} \Pr[draw\ x\ |\ \uniform] * \Pr[draw\ z\ |\ \uniform]* \Pr[\text{switched ballots are at indices in }z]\,.\\
&= (1+n\delta)^{s'} \sum_{x \in \mathbb{X}} \Pr[draw\ x\ |\ \uniform] * \Pr[draw\ z\ |\ \uniform] * \frac{1}{{s \choose s'}}\,\\
&= (1+n\delta)^{s'} \Pr[draw\ S\ |\ \uniform]\,.
\end{align*}

For an example, we can consider the sequence of ballots $S=$[1, 5, 29]. For simplicity, we assume that $s'=1$. Now, we would like to bound the probability that $\actual$ draws $S$.
We can split this up into cases:
\begin{enumerate}
\item $\actual$ produces [1, 5, 29] by drawing 1 and 5 from the uniform distribution, then drawing a ``switched ballot'' at slot 3, and drawing ballot 29, given the switched ballot at position 3.
\item $\actual$ produces [1, 5, 29] by drawing 1 and 29 from the uniform distribution, then drawing a ``switched ballot'' at slot 2, and drawing ballot 5, given the switched ballot at position 2.
\item $\actual$ produces [1, 5, 29] by drawing 5 and 29 from the uniform distribution, then drawing a ``switched ballot'' at slot 1, and drawing ballot 1, given the switched ballot at position 1.
\end{enumerate}
Thus, we define the possible compatible shared list of indices $\mathbb{X}$ as $$\mathbb{X} = \{[0,1], [1, 2], [0, 2]\}. $$ For each possible list $x \in \mathbb{X}$, we can define $z$ as the remaining
possible positions where we sample from $\actual$ instead of $\uniform$. That is, if $x = [0, 1]$, then $z = [2]$. In this case, we must first draw ballots 1 and 5 from the uniform distribution.
Then, assuming that the ballot at slot 2 can be switched, we know that with probability at most $\delta$, it is switched to take the value 29. With probability $\frac{1}{n}$, it takes the value 29 regardless.
Thus, the probability of generating the appropriate sample tally, given a possible switched ballot at slot 2, becomes $\frac{(1+n\delta)}{n^3}$, as desired.

Using this bound we can calculate our total change in acceptance probability. This becomes:
$$\Pr[\mathbb{A}\ accepts\ |\ \actual] - \Pr[\mathbb{A}\ accepts\ |\ \uniform]\,$$
\begin{align*}
&\leq \epsilon_1 + \sum_{S \in D} \Pr[\mathbb{A}\ accepts\ |\ S] (\Pr[draw\ S\ |\ \actual, S\ has\ \leq s'\text{ ``switched'' ballots}] - \Pr[draw\ S\ |\ \uniform])\,\\
&\leq \epsilon_1 + ((1+n\delta)^{s'} - 1) \sum_{S \in D} \Pr[\mathbb{A}\ accepts\ |\ S] \Pr[draw\ S\ |\ \uniform]\,\\
&\leq \epsilon_1 + (1+n\delta)^{s'} - 1\,,
\end{align*}
which provides us the required bound. 
\end{proof}

\subsection{Empirical Support}
\label{sec:empirical-support}

Our previous theorem gives us a total bound of our change in risk limit, which depends on our value of $s'$ and $\delta$. We note that, for each ballot $b$, we provide a general bound
of a multiplicative factor increase of $(1+n\delta)$, which is based off the variation distance of $\delta$. However, we note that in practice, the exact bound we are looking
for depends on the multiplicative increase in probability of a single ballot being chosen. That is, we can calculate the max increase in multiplicative ratio for a single ballot, compared
to the uniform distribution. Thus, if a ballot is chosen with probability at most $\frac{(1 + \epsilon_2)}{n}$, then our bound on the change in probability becomes
$$\epsilon_1 + (1+\epsilon_2)^{s'} - 1.$$
The values of $\epsilon_2$ are recorded, for varying number of cuts in Table~\ref{table:vd-and-eps-versus-k}.

We can calculate the maximum change in probability for a varying number of cuts using this bound. Here, we analyze the case of 6 cuts.
To get a bound on $s'$, we can model how often we switch ballots. In particular, this follows a binomial distribution, with $s$ independent
trials, where each trial has a $\delta_6$ probability of success. Using the binomial survival function, we see at most 4 ``switched ballots'' in 1,000 draws,
with probability (1- $8.78 \times 10^{-4}$). From our previous argument, we know that our change in acceptance probability is at most $(1+\epsilon_2)^4 - 1$. Using our
value of $\epsilon_2$ for $k=6$, this causes a change in probability of at most 0.0090.

Thus, the maximum possible change in probability of incorrectly accepting this outcome is $0.0090 + 8.78 \times 10^{-4}$, which is approximately $9.88 \times 10^{-3}$. We can compensate
for this by adjusting our risk limit by less than 1\%.

\section{Multi-stack Sampling}
\label{sec:multi-stack-sampling}

%\authornote{Rewrite this section. Make more general, less of an example.}

Our discussion so far presumes that all cast paper ballots constitute
a single ``stack,'' and suggest using our proposed $k$-cut procedure
is used to sample ballots from that stack. In practice, however, stacks have limited size, since
large stacks are physically awkward to deal with.  The collection of
cast paper ballots is therefore often arranged into multiple stacks  of some limited size.

The \emph{ballot manifest} describes this arrangement of ballots into
stacks, giving the number of such stacks and the number of ballots
contained in each one. We assume that the
ballot manifest is accurate. A tool like Stark's Tools for Risk-Limiting Audits~\footnote{\url{https://www.stat.berkeley.edu/~stark/Vote/auditTools.htm}}
takes the ballot manifest (together with a random seed and the desired
sample size) as input and produces a sampling plan.

A sampling plan describes exactly which ballots to pick from which
stacks.  That is, the sampling plan consists of a sequence of pairs,
each of the form: (stack-number, ballot-id), where ballot-id may be
either an id imprinted on the ballot or the position of the ballot
in the stack (if imprinted was not done).

Modifying the sampling procedure to use $k$-cut is straightforward.
We ignore the ballot-ids, and note only how many ballots are
to be sampled from each stack.  That number of ballots are then selected 
using $k$-cut rather than using the provided ballot-ids.

For example, if the sampling plan says that $2$ ballots are
to be drawn from stack $5$, then we ignore the ballot-ids for those
specific $2$ ballots, and return $2$ ballots drawn approximately
uniformly at random using $k$-cut.  
%We note that if the audit is a ballot-comparison audit,
%then it must be possible to locate the corresponding CVR given a ballot
%drawn from the stack. In practice, this usually requires the use of ballot-ids
%imprinted on the ballots.

Thus, the fact that cast paper ballots may be arranged into
multiple stacks (or boxes) does not affect the usability of
$k$-cut for performing audits.

\section{Approximate Sampling in Practice}
\label{sec:practical-guidance}

The major question when using the approximate sampling procedure is
how to choose $k$.  Choosing a small value of $k$ makes the overall
auditing procedure more efficient, since you save more time in each
sample you choose. However, it requires more risk limit adjustment.
%
%In general, we recommend choosing $k$ to make sure that the change in
%risk limit remains small.

The risk limit mitigation procedure
requires knowledge of the maximum sample size, which we denote as $s^{*}$,
beforehand. We assume that the auditors have a reasonable procedure for estimating
$s^{*}$ for a given contest.  One simple procedure to estimate $s^{*}$ is to get an initial
small sample size, $s$, using
uniform random sampling. Then, we can use a statistical procedure to
approximate how many ballots we would need to finish the audit,
assuming the rest of the ballots in the pool are similar to the
sample. Possible statistical procedures which can be used here
include:
\begin{itemize}
\item Replicate the votes on the ballots,
\item Sample from the multinomial distribution, using the sample voteshares as hyperparameters,
\item Use the Polya's Urn model to extend the sample,
\item Use the workload estimate as defined in \cite{Lindeman-2012-bravo},
for a contest with risk limit $\alpha$ and margin $m$ to predict the number of samples required.
\end{itemize}

%\authornote{Perhaps we can use the estimate $\log(1/\alpha)/m$ for risk limit
%  $\alpha$ and margin $m$ to help with the estimate in the change of workload
%  required?}
Let us assume that we use one of these techniques and calculate that
the audit is complete after an extension of size $d$. To be safe, we
suggest assuming that the required additional sample size for the audit is at
most $2d$ or $3d$, to choose the value of $k$. Thus, our final bound on $s^{*}$
would be $s+3d$. 

Given this upper bound, we can perform
our approximate sampling procedures and mitigation procedures, assuming that
we are drawing a sample of size $s^{*}$. If the sample size required is
greater than $s^{*}$, then the ballots which are sampled
after the first $s^{*}$ ballots should be sampled uniformly at random.

%Else, we recommend
%re-starting the audit from scratch, with the appropriate value of $k$
%for the new larger sample size.

\paragraph{Use in Indiana pilot audit.}

On May 29--30, 2018, the county of Marion, Indiana held a pilot audit
of election results from the November 2016 general election.  This
audit was held by the Marion County Election Board with assistance
from the Voting System Technology Oversight Project (VSTOP) Ball State
University, the Election Assistance Commission, and the current
authors.

For some of the sampling performed in this audit, the ``Three-Cut''
sampling method of this paper was used instead of the ``counting to a
given position'' method.  The Three-Cut method was modified so that
three different people made each of the three cuts; the stack of
ballots was passed from one person to the next after a cut.

Although the experimentation was informal and timings were not
measured, the Three-Cut method did provide a speed-up in the sampling
process.

The experiment was judged to be sufficiently successful that further
development and experimentation was deemed to be justified.  (Hence
this paper.)

%\authornote{Was there a report produced for this audit? (Yes, should be ready in
%early August.)}

\section{Discussion and Open Problems}
\label{sec:discussion-and-open-problems}

We would like to do more experimentation on the variation between individuals
on their cut-size distributions. The current empirical results in this paper
are based off of the cut distributions of just the two authors in the paper. We would
like to test a larger group of people to better understand
what distributions should be used in practice.

After investigating the empirical distributions of cuts, we would like to develop
``best practices'' for using the $k$-cut procedure. That is, we'd like to
develop a set of techniques that auditors can use to produce
nearly-uniform single-cut-size distributions. This will make using the $k$-cut
procedure much more efficient.

Finally, we note that our analysis makes some assumptions about how $k$-cut
is run in practice. For instance, we assume that each cut is 
made independently. We would like to run some empirical experiments to test our
assumptions.

\section{Conclusions}
\label{sec:conclusions}

We have presented an approximate sampling procedure, $k$-cut, for use
in post-election audits. We expect the use of $k$-cut may save time
since it eliminates the need to count many ballots in a stack
to find the desired one.

We showed that as $k$ gets larger, our procedure provides a sample
that is arbitrarily close to a uniform random sample. Moreover, we showed
that even for small values of $k$, our procedure provides a sample that is close
to being chosen uniformly at random. We designed a simple mitigation procedure for RLAs
that accounts for any remnant non-uniformity, by adjusting the risk limit. Finally,
we provided a recommendation of $k=6$ cuts to use in practice,
for sample sizes up to 1,000 ballots, based on our empirical data.

\bibliographystyle{splncs04}
% \bibliography{mybibliography}

\bibliography{conf_paper}
\end{document}